\newtheorem{theorem}{Theorem}[section]
\newtheorem{lemma}[theorem]{Lemma}
\newtheorem{proposition}[theorem]{Proposition}
\newtheorem{corollary}[theorem]{Corollary}
\theoremstyle{definition}
\theoremstyle{remark}
\DeclarePairedDelimiter{\norm}{\|}{\|}
\DeclarePairedDelimiter{\braket}{ \langle}{ \rangle}
\newcommand{\bkt}[2]{\braket{#1, #2}}
\newcommand{\cut}[1]{}
\def\E{\mathbb{E}}
\def\L{\mathbb{L}}
\def\R{\mathbb{R}}
\def\g{\gamma}
\def\d{\delta}
\def\e{\varepsilon}
\def\th{\theta}
\def\la{\langle}
\def\ra{\rangle}
\newcommand{\ip}[1]{{\langle #1 \rangle}}
\def\Vol{\text{Vol}}
\def\dist{\text{dist}}
\def\cg{\text{cg}}
\def\st{\text{st}}
\def\sse{\subseteq}
\def\poly{\text{poly}}
\renewcommand{\emptyset}{\varnothing}
\newcommand{\final}{\text{final}}
\newcommand{\OPT}{\mathsf{opt}}
\newcommand{\initOneLiners}{%
    \setlength{\itemsep}{0pt}
    \setlength{\parsep }{0pt}
    \setlength{\topsep }{0pt}
}
\newenvironment{OneLiners}[1][\ensuremath{\bullet}]
    {\begin{list}
        {#1}
        {\initOneLiners}}
    {\end{list}}
\title{Chasing Convex Bodies with Linear Competitive Ratio}
\author[1]{C.J.~Argue  } 
\author[1]{Anupam Gupta}
\author[2]{Guru Guruganesh}
\author[1]{Ziye Tang}
\affil[1]{Carnegie Mellon University}
\affil[2]{Google Research}
\begin{document}

\maketitle 

%=======================================================

\begin{abstract}
We study the problem of chasing convex bodies online: given a sequence
of convex bodies $K_t\sse \R^d$ the algorithm must respond with points
$x_t\in K_t$ in an online fashion (i.e., $x_t$ is chosen before
$K_{t+1}$ is revealed). The objective is to minimize the sum of distances between successive points in this sequence. Bubeck et al.\ (STOC 2019) gave a $2^{O(d)}$-competitive algorithm for this problem.
We give an algorithm that is $O(\min(d, \sqrt{d \log T}))$-competitive for any sequence of length $T$.
\end{abstract}

\section{Introduction}

In the problem of \emph{chasing convex bodies}, we control a
point in $\R^d$, initially at $x_0 = 0$. At each timestep $t$, the
adversary gives a closed convex sets $K_t \sse \R^d$ as its
\emph{request}, and we must respond with a point $x_t \in K_t$. (In
essence we are moving our point to lie within $K_t$.) This response
must be done in an \emph{online} fashion, namely $x_t$ must be chosen before
the next set $K_{t+1}$ is revealed. The goal is to minimize the
total movement cost: if the process goes on for $T$ steps, this cost is
\[ALG := \sum_{t=1}^T \|x_t - x_{t-1}\|.\] We focus on the
Euclidean norm; there are other results known for general norms---see,
e.g.,~\cite{bubeck2018chasing} for results and references. We work in
the framework of \emph{competitive analysis}; we compare the algorithm's
cost to the optimal cost $OPT$ incurred by an all-knowing adversary, and
the ratio between the two is called the \emph{competitive ratio}.

Friedman and Linial~\cite{FL93} introduced the problem of chasing
convex bodies as a generalization of several existing online
problems. They showed that no algorithm can achieve a competitive
ratio lower than $\sqrt{d}$, gave an algorithm for $d=2$, and posed
getting an $O(f(d))$-competitive algorithm for general $d$ as an open
problem. Over the years, competitive algorithms were given for special
cases; see, e.g.,~\cite{FL93,fujiwara2008online,Sitters,Anto}.
Motivated by
problems in data center scheduling, this problem was also studied
under the name of \emph{smoothed online convex optimization} (SOCO); see, e.g., ~\cite{chen2018smoothed,goel2019beyond}.

A particularly relevant line of
recent work is that for \emph{nested} convex body chasing, where the
request sequence satisfies $K_t \supseteq K_{t+1}$ for all $t$. For
this case, a series of works~\cite{Bansal,argue2019nearly} culminated
in Bubeck et al.~\cite{bubeck2018chasing} giving an
$O(\min(d, \sqrt{d \log T}))$-competitive algorithm based on the
classical notion of the Steiner point of convex bodies, and a more
sophisticated algorithm with a near-optimal competitive ratio of
$O(\sqrt{d\log d})$. Moreover, Bubeck et al.~\cite{BubeckLLS19-stoc}
built on the recursive centroid approach
from~\cite{argue2019nearly} to give the first $f(d)$-competitive algorithm for the general non-nested
case in general dimensions $d$, thereby resolving the
open problem from~\cite{FL93}. Their algorithm has competitive ratio
$f(d) = 2^{O(d)}$, and it substantially extends the recursive approach
of~\cite{argue2019nearly} for the nested case using a several new arguments. 
Our main result gives a simpler algorithm with an improved competitive ratio.

\begin{theorem}
  \label{thm:main}
  There is an $O(\min(d, \sqrt{d \log T}))$-competitive algorithm for
  the general convex body chasing problem in $d$-dimensional Euclidean
  space.
\end{theorem}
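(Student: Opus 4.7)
The plan is to reduce the general (non-nested) convex body chasing problem to the nested case via the work function, and then apply the Steiner point algorithm of \cite{bubeck2018chasing} for nested chasing as a black box. Define the work function $W_t \colon \R^d \to \R \cup \{+\infty\}$ by
\[
W_t(x) := \min\Bigl\{ \sum_{i=1}^t \|y_i - y_{i-1}\| \;:\; y_0 = x_0,\ y_i \in K_i\ \forall\, i \leq t,\ y_t = x \Bigr\}.
\]
Convexity of each $K_i$ makes $W_t$ a convex function, so each sublevel set $B_t(r) := \{x : W_t(x) \leq r\}$ is a convex body. Since $W_{t+1} \geq W_t$, we have $B_{t+1}(r) \sse B_t(r)$: for each fixed $r$, $\{B_t(r)\}_{t \geq 0}$ is a nested sequence of convex bodies. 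Two properties are crucial. First, $B_t(r) \sse K_t$ (since $y_t = x$ forces $x \in K_t$), so any point in $B_t(r)$ is a feasible response at time $t$. Second, the triangle inequality gives $B_t(r) \sse \{x : \|x - x_0\| \leq r\}$, so $B_0(r)$ is exactly this ball and all later $B_t(r)$ are contained in it.

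The algorithm runs the nested Steiner point chasing algorithm of \cite{bubeck2018chasing} on $\{B_t(r)\}_t$ at an adaptively chosen scale $r$. Start with $r = 1$; at each time $t$, while $B_t(r) = \emptyset$, double $r$ and restart the nested algorithm on the current body $B_t(r)$; otherwise, feed $B_t(r)$ to the running nested algorithm and output its response $x_t \in B_t(r) \sse K_t$. Let $r^*$ be the final scale. By \cite{bubeck2018chasing}, each scale-$r$ run incurs total movement at most $O(\min(d, \sqrt{d \log T})) \cdot r$, since its initial body has diameter $O(r)$; each scale-doubling adds at most $O(r)$ movement to jump to the new scale's Steiner point. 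Summing over $r \in \{1, 2, 4, \ldots, r^*\}$ gives a geometric series bounded by $O(\min(d, \sqrt{d \log T})) \cdot r^*$.

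To bound $r^*$ by $OPT$: if $p^*_0, p^*_1, \ldots, p^*_T$ is an optimal trajectory with total cost $OPT$, then $W_t(p^*_t) \leq \sum_{i \leq t} \|p^*_i - p^*_{i-1}\| \leq OPT$ for every $t$, so $p^*_t \in B_t(OPT)$. Thus $B_t(r)$ is nonempty whenever $r \geq OPT$, and the doubling rule guarantees $r^* \leq 2\, OPT$. Combining, $ALG = O(\min(d, \sqrt{d \log T})) \cdot OPT$, as claimed.

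The main obstacle is verifying that the nested Steiner algorithm of \cite{bubeck2018chasing} has a total-movement guarantee depending only on the initial body's diameter (uniformly in the starting state and in how the bodies shrink); this is exactly what makes each scale restart affordable within a geometric budget. A secondary point is the computability of $B_t(r)$: one only needs membership or support-function access, which can in principle be maintained via the dynamic programming recursion $W_t(x) = \min_{y \in K_{t-1}}\bigl(W_{t-1}(y) + \|x - y\|\bigr)$ restricted to $x \in K_t$, though this is a technicality rather than a conceptual difficulty.
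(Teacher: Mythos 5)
Your reduction hinges on two simultaneous claims that are incompatible in the non-nested setting: (i) $B_t(r)\sse K_t$, because you define $W_t$ with the constraint $y_t = x\in K_t$, and (ii) $W_{t+1}\geq W_t$, which would give the nesting $B_{t+1}(r)\sse B_t(r)$. Claim (ii) is false under your definition. With $y_t=x$ forced into $K_t$, you have $W_t(x)=+\infty$ for $x\notin K_t$, so $W_t$ is not $1$-Lipschitz and is not monotone in $t$. A one-dimensional counterexample: $x_0=0$, $K_1=[1,2]$, $K_2=[3,4]$. Then $W_1(3.5)=+\infty$ while $W_2(3.5)=3.5$, so $W_2\not\geq W_1$ and $B_2(4)=[3,4]\not\sse[1,2]=B_1(4)$. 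Without nesting you cannot invoke the nested Steiner-point guarantee of \cite{bubeck2018chasing}, and the argument collapses.

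The paper avoids this by defining the work function \emph{without} constraining the endpoint to lie in $K_t$: $w_t(x)=\min_{y\in K_t}\bigl(w_{t-1}(y)+\|x-y\|\bigr)$, i.e., you may pay to move to any final point $x$. This $w_t$ is finite everywhere and $1$-Lipschitz, so $w_t\leq w_{t+1}$ and the sublevel sets $\Omega_t$ are nested --- but now $\Omega_t\not\sse K_t$, so feasibility of the output is nontrivial. The entire technical content of the paper is Lemma~\ref{lem:feasible}: a reflection argument (across any supporting halfspace of $K_t$) showing that $\st(\Omega_t)\in K_t$ nonetheless. Your proposal correctly identifies the outer structure (doubling over scales $r$, applying the nested Steiner chaser at each scale, bounding $r^*$ by $O(\OPT)$), but by picking the wrong work function it assumes away the one lemma that actually makes this approach work.
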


Our algorithm appears in \Cref{sec:algorithm}. It uses the Steiner-point based 
algorithm for the nested case~\cite{bubeck2018chasing} applied to a suitable
body $\Omega_t$. The body is defined using the concept of \emph{work functions}
commonly used in online algorithms~\cite{Sitters}. Hence, 
our result relies on bringing together the two lines of investigation
that had held out hope for the problem: ideas based on the work function,
and those based on convex geometry. In~\Cref{sec:efficient}, we give the details 
of an efficient implementation. The reader not interested in
these computational issues can focus only on \Cref{sec:algorithm}.

Friedman and Linial also defined a more general problem of convex
function chasing, as a direct generalization of \emph{metrical
  task systems}~\cite{DBLP:journals/jacm/BorodinLS92}. Here the request at time $t$ is a convex function
$g_t: \R^d \to \R_{\ge 0}$, and the algorithm must again respond with
a point $x_t \in \R^d$. The goal is to minimize
$\sum_t (g_t(x_t) + \| x_t - x_{t-1}\|)$, i.e., the total \emph{hit
  cost} in addition to the movement cost. Setting $g_t$ to be the
indicator of a convex set $K_t$ gives us back the problem of chasing
convex bodies. Bubeck et al.~\cite{BubeckLLS19-stoc} showed that
convex function chasing in $d$ dimensions can be reduced to convex
body chasing in $d+1$ dimensions using the epigraph. This reduction
loses only a constant factor in the competitive ratio. Hence we get
$O(\min(d, \sqrt{d \log T}))$-competitiveness for chasing convex
functions as well.
Independent and concurrent work by M.\ Sellke extends the notion of
Steiner points to convex functions in order to give a $\min(d, O(\sqrt{d
  \log T}))$-competitive
ratio for convex function chasing, and hence for convex bodies as well~\cite{Sellke}.

\subsection{Notation}
\label{sec:notation}

Given an instance of convex body chasing, the \emph{work
  function} $w_t(x)$ at time $t$ for the point $x \in \R^d$ is
the cost of the optimal trajectory that starts at $x_0$, satisfies the
first $t$ requests and ends at the point $x$. Formally, it is given by
the convex program:
\[\begin{array}{rl} 
w_t(x) = \text{min} 
	& \sum\limits_{s=1}^t \|x_s-x_{s-1}\| + \|x_t - x\| \vspace{2pt}\\
\text{s.t.} 
	& x_t\in K_t\\
	& x_0 = 0
\end{array} \]

Let $K\sse \R^d$ be a bounded convex body, and let $\cg(K)$ denote the
\emph{center of mass} of $K$. Let 
  $B(x_0,R):= \{x\in \R^d: \|x-x_0\| \le R\}$ denote the ball of radius $R$ 
  centered at $x_0$, and let $B := B(0,1)$ denote the unit ball in $\R^d$. 
The \emph{Steiner point} of the convex body $K$ is~\cite{Prz}:
\begin{gather}
  \st(K):= \lim_{s\to \infty} \cg(K+sB), \label{eq:1}
\end{gather}
where the sum is the Minkowski sum $K_1 + K_2 := \{ x_1+x_2 \mid x_i
\in K_i\}$.
The Steiner point has several other equivalent characterizations, including
these two~\cite[Section~5.4]{Sch-book}:
\begin{align}
  \st(K) &= d\int_{\theta\in S^{d-1}}\theta\, h_K(\theta) \,d\omega(\theta) =
d\,   \E_{\theta \sim \omega}[ \theta \, h_K(\theta)]
  \quad \text{and}
   \label{eq:2} \\
\st(K) &= \int_{\theta\in S^{d-1}} \nabla h_K(\theta) \, d\omega(\theta) =
\E_{\theta\sim \omega}[\nabla h_K(\theta)], \label{eq:st-def3}
\end{align}
where $\omega$ is the uniform (isometry invariant) measure
on $S^{d-1}$ with $\omega(S^{d-1}) = 1$,
the function \[ h_K(\theta) := \max_{x \in K} \ip{\theta, x} \] is the
\emph{support function} of the convex body $K$, and its gradient
\[ \nabla h_K(\theta) := \arg\max_{x \in K} \ip{\theta, x} \] is the
maximizing point (see \Cref{fig:grad_support} for an example). We will use 
definitions~(\ref{eq:1}) and~(\ref{eq:2}) in the two proofs for the 
algorithm, and~(\ref{eq:2}) for the efficient implementation.
Definition~(\ref{eq:st-def3}) shows that $\st(K)\in K$, which is crucial
to the proof of Bubeck et al. in \cite{bubeck2018chasing}, though we will not need this property 
for our proofs.

  \begin{figure}[t]
	\centering
	\begin{center}
		\includegraphics[height=1.5in,page=2]{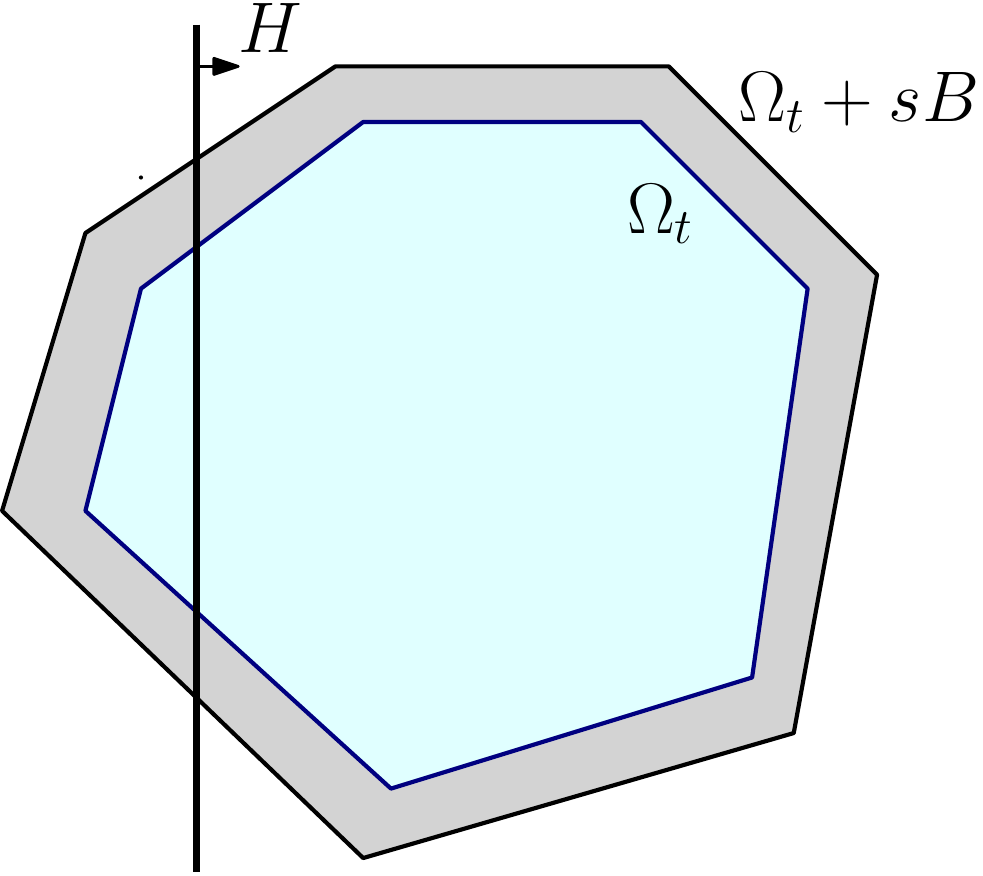} 
		\caption{$\nabla h_K(\theta)$ is the maximizer of $\max_{x\in K}\ip{\theta, x}$}
		\label{fig:grad_support}
	\end{center}
\end{figure}

See, e.g.,~\cite{BL-book,Sch-book}, for more about the
history and properties of the Steiner point. Among other facts, the
Steiner point is a \emph{selector} (i.e., a map from bounded convex bodies to
points inside them) with the smallest Lipschitz constant, with respect
to the Hausdorff distance on convex bodies. 
The work function has also been a staple in the online algorithms community. In fact, 
it achieves the best known deterministic competitive ratio for several problems
including the Metrical Task Systems Problem and K-Server Problem (see~\cite{BRS,KP95}). 

\section{The Algorithm}
\label{sec:algorithm}

The algorithm has an outer guess-and-double step, where
we maintain a current estimate $r$ that lies in $[OPT/2, OPT]$. Given
such an $r$, the algorithm is a single sentence: at each time $t$ we take
the $2r$-level set $\Omega_t$ of the work function $w_t(x)$, and move to its Steiner point.

\begin{algorithm}
  \caption{Steiner-Point-based General Chasing}
  \label{alg:algorithm-label}
  \begin{algorithmic}[1]
    \State $x_0 \gets 0$, $r \leftarrow \dist(x_0,K_1)$
    \Comment{Assume that $x_0\notin K_1$}
    \For{ $t = 1,\dots,$ }
    \State $\Omega_t \leftarrow \{x \mid w_t(x) \le 2r\}$
    \While{ $(\Omega_t = \emptyset)$}
    \State $r \leftarrow 2r$
    \State $\Omega_t \leftarrow \{x \mid w_t(x)\le 2r\}$     
    \EndWhile
    \State $x_t \leftarrow \st(\Omega_t)$ \label{alg:choice}
    \Comment{\Cref{lem:feasible} shows feasibility}
    \EndFor
  \end{algorithmic}
\end{algorithm}

\begin{lemma}
  \label{lem:convex}
  The work function $w_t$ is a convex function, and  $\Omega_t$
  is a bounded convex set.
\end{lemma}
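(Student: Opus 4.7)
The plan is to prove convexity of $w_t$ by averaging trajectories, and then bound $\Omega_t$ using a triangle-inequality lower bound on $w_t$.

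For convexity of $w_t$, I would argue directly from the definition. Pick any $x, y \in \R^d$ and $\lambda \in [0,1]$. Let $(x_1, \ldots, x_t)$ and $(y_1, \ldots, y_t)$ be (near-)optimal feasible trajectories certifying $w_t(x)$ and $w_t(y)$ respectively. Define the convex combination $z_s := \lambda x_s + (1-\lambda) y_s$ for $s = 0, 1, \ldots, t$. Since each $K_s$ is convex and $x_s, y_s \in K_s$, we have $z_s \in K_s$; also $z_0 = 0$. By the triangle inequality (or, equivalently, convexity of the Euclidean norm), $\|z_s - z_{s-1}\| \le \lambda \|x_s - x_{s-1}\| + (1-\lambda)\|y_s - y_{s-1}\|$ for each $s$, and likewise $\|z_t - (\lambda x + (1-\lambda) y)\| \le \lambda \|x_t - x\| + (1-\lambda)\|y_t - y\|$. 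Summing shows that $(z_1, \ldots, z_t)$ is a feasible trajectory for the endpoint $\lambda x + (1-\lambda) y$ with total cost at most $\lambda w_t(x) + (1-\lambda) w_t(y)$, giving convexity of $w_t$. Taking infima is compatible with the inequalities because we can let the optimal (or $\varepsilon$-optimal) trajectories witness the bound for each $x, y$.

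Given convexity of $w_t$, the set $\Omega_t = \{x : w_t(x) \le 2r\}$ is convex as the sublevel set of a convex function. For boundedness, I would use the triangle inequality along the concatenated trajectory $x_0 \to x_1 \to \cdots \to x_t \to x$: for any feasible choice of $x_1, \ldots, x_t$,
\[
\sum_{s=1}^t \|x_s - x_{s-1}\| + \|x_t - x\| \;\ge\; \|x - x_0\| \;=\; \|x\|.
\]
Taking the minimum over feasible trajectories yields $w_t(x) \ge \|x\|$. Hence $\Omega_t \subseteq B(0, 2r)$, which is bounded.

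There is no serious obstacle here: the convexity argument is the standard ``convex combination of feasible solutions'' technique, and the boundedness reduces to one application of the triangle inequality. The only small care needed is handling the infima properly (taking $\varepsilon$-optimal trajectories and letting $\varepsilon \to 0$, or noting that the minima are attained since each $K_s$ is closed and the objective has bounded sublevel sets by the same triangle-inequality argument).
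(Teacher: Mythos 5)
Your proof is correct and takes essentially the same approach as the paper: convexity by averaging the two witnessing trajectories, and boundedness via the triangle-inequality bound $w_t(x) \ge \|x\|$ (which the paper asserts without spelling out). The extra care you note about taking $\varepsilon$-optimal trajectories is a reasonable technical remark, though the paper treats the minima as attained.
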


\begin{proof}
  For $x, y \in \R^d$, let $\{x_s\}_{s=1}^t, \{y_s\}_{s=1}^t$ be the optimal
  solutions to the convex program that witness the values
  $w_t(x), w_t(y)$. Consider $z := \lambda x + (1-\lambda)y$. Then for
  each timestep $s \leq t$, define $z_s := \lambda x_s +
  (1-\lambda)y_s$ and note that $z_s \in K_s$ by convexity. Therefore,
  \[ w_t(z) \leq \sum_{s=1}^t\|z_s-z_{s-1}\| + \|z_t - z\| \leq \lambda w_t(x) +
    (1-\lambda)w_t(y), \]
  where the first inequality is because $\{z_s\}$ is a feasible
  solution to the convex program for point $z$ and $w_t(z)$ is the
  minimum value for it, and the second inequality is by the convexity of 
  the norm. Hence $w_t(\cdot)$ is
  a convex function, and its (sub-)level sets $\Omega_t$ are convex
  sets. Moreover, $\Omega_t \sse \{ x \mid \|x \| \leq 2r\}$ and hence
  is bounded.
\end{proof}

The all-important next claim shows that our choice of $x_t$ in Step~\ref{alg:choice} is feasible. \vspace{5pt}
\begin{lemma}
  \label{lem:feasible}
With $\Omega_t$ and $x_t$ as defined in the algorithm, $x_t\in K_t$.
\end{lemma}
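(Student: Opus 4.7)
The plan is to pin down $\Omega_t$ explicitly in terms of $K_t$, and then compute its Steiner point.

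First, I would observe the infimal-convolution identity
\[ w_t(x) = \min_{y \in K_t}\bigl(w_t(y) + \|x - y\|\bigr), \]
which holds because any optimal trajectory for $w_t(x)$ must end its $t$-th step at some point $y \in K_t$ (to satisfy the request $K_t$) and then travel in a straight line to $x$. This yields the explicit description
\[ \Omega_t \;=\; \bigcup_{y \in K_t,\ w_t(y)\le 2r}\! B\bigl(y,\, 2r - w_t(y)\bigr), \]
exhibiting $\Omega_t$ as a union of balls centered at points of $K_t$. Setting $K_t' := \{y \in K_t : w_t(y) \le 2r\}$ and $g(y) := 2r - w_t(y)$, the function $g$ is non-negative, concave, and $1$-Lipschitz on $K_t'$ (from the convexity and $1$-Lipschitz property of $w_t$).

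Second, from this description, for each unit $\theta \in S^{d-1}$ the extreme point of $\Omega_t$ in direction $\theta$ equals $y^*(\theta) + g(y^*(\theta))\theta$, where $y^*(\theta) \in K_t'$ maximizes $\ip{\theta, y} + g(y)$. Substituting into the Steiner-point formula~(\ref{eq:2}) and using $\E_\theta[\theta] = 0$ to kill the constant $2r$ contribution gives
\[ \st(\Omega_t) \;=\; d \cdot \E_\theta\bigl[\theta\bigl(\ip{\theta, y^*(\theta)} - w_t(y^*(\theta))\bigr)\bigr]. \]
An integration-by-parts argument on the sphere (equivalently, the divergence theorem applied to the $1$-homogeneous function $\theta \mapsto h_{\Omega_t}(\theta) - 2r\|\theta\|$ on the unit ball) simplifies this to the more transparent form
\[ \st(\Omega_t) \;=\; \E_\theta\bigl[y^*(\theta) - w_t(y^*(\theta))\,\theta\bigr]. \]

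Third and last, I would verify that this expectation lies in $K_t$. The term $\E_\theta[y^*(\theta)]$ is a convex combination of points of $K_t' \subseteq K_t$, so it is automatically in $K_t$; the task is to control the correction $-\E_\theta[w_t(y^*(\theta))\theta]$. This is the main obstacle: the individual integrands $y^*(\theta) - w_t(y^*(\theta))\theta$ can lie outside $K_t$, so the averaging must do genuine cancellation. The plan is to check, for every unit vector $u$, that $\ip{u, \st(\Omega_t)} \le h_{K_t}(u)$, pairing the contributions from $\theta$ and $-\theta$ and invoking the $1$-Lipschitz property of $w_t$ together with the concavity of $g$; this pairing/symmetrization argument is where I expect the most care to be needed.
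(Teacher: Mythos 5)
Your first two steps are correct and actually quite clean: the infimal-convolution identity, the description $\Omega_t=\bigcup_{y\in K_t'}B(y,\,2r-w_t(y))$, the resulting formula $h_{\Omega_t}(\theta)=\max_{y\in K_t'}\bigl(\ip{\theta,y}+g(y)\bigr)$, and the reduction $\st(\Omega_t)=\E_\theta[\nabla h_{\Omega_t}(\theta)]=\E_\theta[y^*(\theta)-w_t(y^*(\theta))\theta]$ (which is just the paper's definition~\eqref{eq:st-def3} applied to $\Omega_t$) are all fine. But you have deferred precisely the part of the proof that requires an idea, and the plan you sketch for it --- ``pairing the contributions from $\theta$ and $-\theta$'' --- is, as far as I can tell, not the right pairing. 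Antipodal symmetrization does not interact with the halfspace $H\supseteq K_t$ that you are trying to respect, and indeed the paper's two proofs both pair $\theta$ with its \emph{reflection} $\rho(\theta)=\theta-2\ip{a,\theta}a$ across the supporting hyperplane $H^=$ of $H$ (for $H$ an arbitrary halfspace containing $K_t$ with unit normal $a$), not with $-\theta$. The antipodal and reflection pairings coincide only when $\theta=\pm a$. I was not able to make the antipodal pairing close even with the $1$-Lipschitz and concavity properties of $g$, and I believe it does not, since it never uses that the balls are centered in the \emph{same} halfspace as $K_t$.

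The good news is that your union-of-balls description hands you the reflection argument almost for free, so the fix is small. Fix a halfspace $H=\{x:\ip{a,x}\ge b\}\supseteq K_t$. For any $y\in K_t'\subseteq H$ and any $x\in B(y,g(y))$ with $x\notin H$, the reflection $\rho(x)$ satisfies $\|\rho(x)-y\|\le\|x-y\|\le g(y)$, so $\rho(x)\in B(y,g(y))\subseteq\Omega_t$; that is, $\rho$ maps $\Omega_t\setminus H$ into $\Omega_t$. From this it follows (translating so $b=0$) that for $\theta\in S^{d-1}$ with $\ip{a,\theta}\le 0$ one has $h_{\Omega_t}(\rho(\theta))\ge h_{\Omega_t}(\theta)$, and hence $\ip{a,\,\theta\,h_{\Omega_t}(\theta)+\rho(\theta)\,h_{\Omega_t}(\rho(\theta))}=\ip{a,\theta}\bigl(h_{\Omega_t}(\theta)-h_{\Omega_t}(\rho(\theta))\bigr)\ge 0$. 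Since $\rho$ is a measure-preserving involution exchanging the two hemispheres, averaging these pairs gives $\ip{a,\E_\theta[\theta\,h_{\Omega_t}(\theta)]}\ge 0$, and scaling by $d$ keeps you in $H$ because $H^=$ passes through the origin. As $H$ was an arbitrary halfspace containing $K_t$, this gives $\st(\Omega_t)\in K_t$. So: keep steps one and two, but replace the $\theta\leftrightarrow-\theta$ pairing in step three with the $\theta\leftrightarrow\rho(\theta)$ pairing and supply the inequality $h_{\Omega_t}(\rho(\theta))\ge h_{\Omega_t}(\theta)$, which your ball decomposition makes essentially immediate.
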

\begin{proof}
  By the first definition~\eqref{eq:1} of $\st(\Omega_t)$, it suffices
  to show that for any fixed $s\ge 0$ we have $\cg(\Omega_t + sB) \in
  K_t$. To this end, we show that 
  $\cg(\Omega_t + sB)$ is contained in every halfspace containing $K_t$. 
  
  Let $H:=\{x\in \R^d \mid \la a, x\ra \ge b\}$ be a halfspace containing
  $K_t$ and let $H^=:= \{x\in \R^d \mid \la a, x\ra = b\}$ denote the supporting
  hyperplane of this halfspace. (Assume that $a$ is a unit vector.)
  For $x\not\in H$, define $\rho(x)$ to be the
  reflection of $x$ across $H^=$:
  \[ \rho(x) := x + 2(b - \la a, x\ra)\; a. 
  \]
  Consider that
  \[w_t(x) = \min_{y\in K_t} \left\{ \|x-y\| + w_{t-1}(y) \right\}.\]
  Let $y$ be the argmin of the expression on the right. Using that fact that
  $y\in K_t\sse H$, we have
  \[w_t(x) = \|x-y\| + w_{t-1}(y) \ge \|\rho(x) - y\| + w_{t-1}(y) \ge
    w_t(\rho(x)).\] It follows that if the point
  $x\in \Omega_t \setminus H$, then its reflection $\rho(x)\in \Omega_t$
  as well.

  \begin{figure}[t]
    \centering
    \begin{center}
      \includegraphics[height=1.5in,page=1]{reflect} \qquad \qquad 
      \includegraphics[height=1.5in]{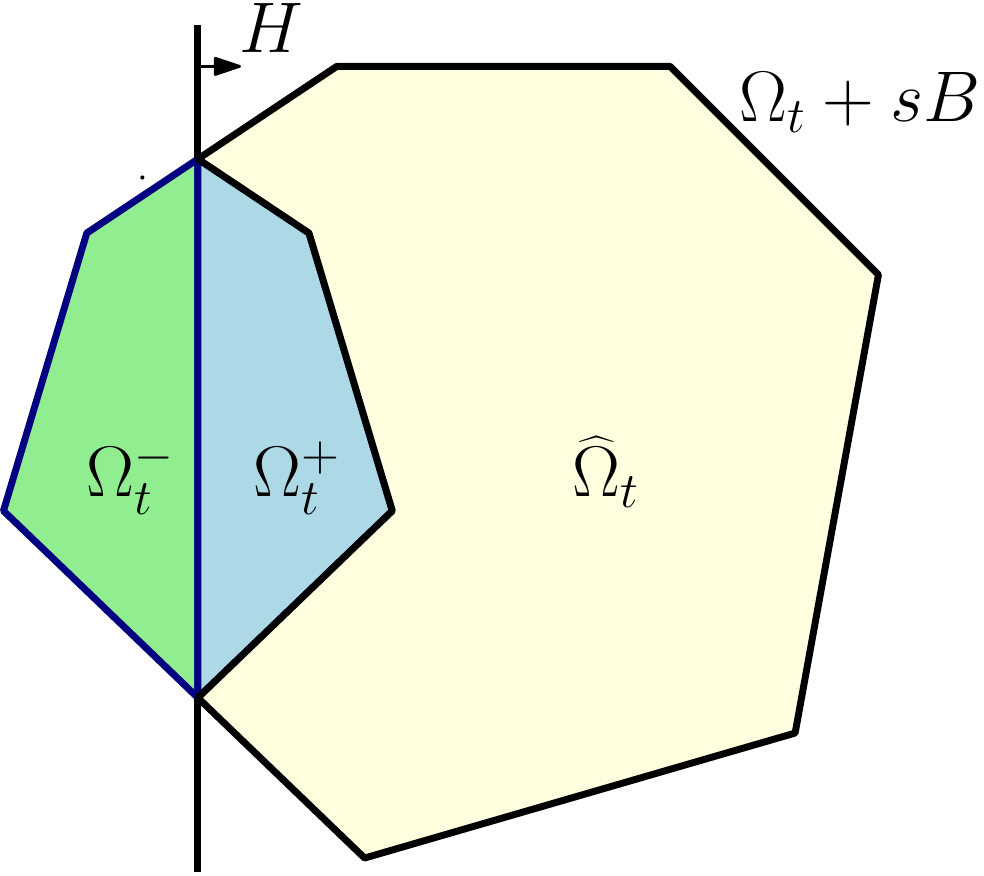}
      \caption{Proof of Lemma~\ref{lem:feasible}}
      \label{fig:slice}
    \end{center}
  \end{figure}

  A similar argument holds for any point
  $z\in (\Omega_t+ sB) \setminus H$: we claim that
  $\rho(z)\in \Omega_t+sB$. Get $x\in \Omega_t$ with $\|z-x\|\leq s$. If
  $x\in H$, then $\|\rho(z)-x\|\le \|z-x\|$ so $\rho(z)\in
  \Omega_t+sB$. Suppose $x\notin H$.  The above argument shows that
  $\rho(x)\in \Omega_t$ and furthermore
  $\|\rho(z)- \rho(x)\| = \|z-x\|\le s$, so $\rho(z)\in \Omega_t+sB$.
  Hence, the reflection of the part of $\Omega_t+sB$ that is infeasible
  for $H$ actually lies within $\Omega_t+sB$: namely, for any $s \geq 0$,
  \begin{gather}
    \rho((\Omega_t+sB)\setminus H)\sse (\Omega_t+sB)\cap H. \label{eq:3}
  \end{gather}

  For convenience, we will write the convex body 
  $\Omega_t + sB = \Omega^- \cup \Omega^+ \cup \widehat\Omega$
  where
  \begin{itemize}
  \item $\Omega^- := (\Omega_t+sB)\setminus H$.
  \item $\Omega^+ := \rho(\Omega^-)$.
  \item $\widehat \Omega := [(\Omega_t+sB)\cap H]\setminus \Omega^+$.
  \end{itemize}
  By symmetry $\cg(\Omega^- \cup \Omega^+)$ lies on $H^=$. 
  Since $\widehat\Omega\sse H$, we have $\cg(\widehat\Omega)\in H$.
  Letting $\g = \frac{\Vol(\Omega^+ \cup \Omega^-)}{\Vol(\Omega_t+sB)}$, 
  it follows that 
  \[\cg(\Omega_t+sB) 
    = \cg((\Omega^- \cup \Omega^+)\cup \widehat \Omega) = \gamma
    \cdot\cg(\Omega^- \cup \Omega^+) +
    (1-\gamma)\cdot\cg(\widehat\Omega) \in H.\] 
  But $H$ was chosen to be a generic halfspace containing $K_t$, so 
  $\cg(\Omega_t + sB)$ is contained in every halfspace $H$ containing
  $K_t$. 
\end{proof}

\begin{proof}[An Alternate Proof of \Cref{lem:feasible}] 
  This proof
  follows a reflection argument similar to the one in the first proof,
  but is based on definition~\eqref{eq:2} of the Steiner
  point. By translating the frame of reference, assume that the origin lies in
  $H^{=}$, namely $b=0$.

\begin{figure}[t]
    \centering
    \begin{center}
      \includegraphics[height=1.5in]{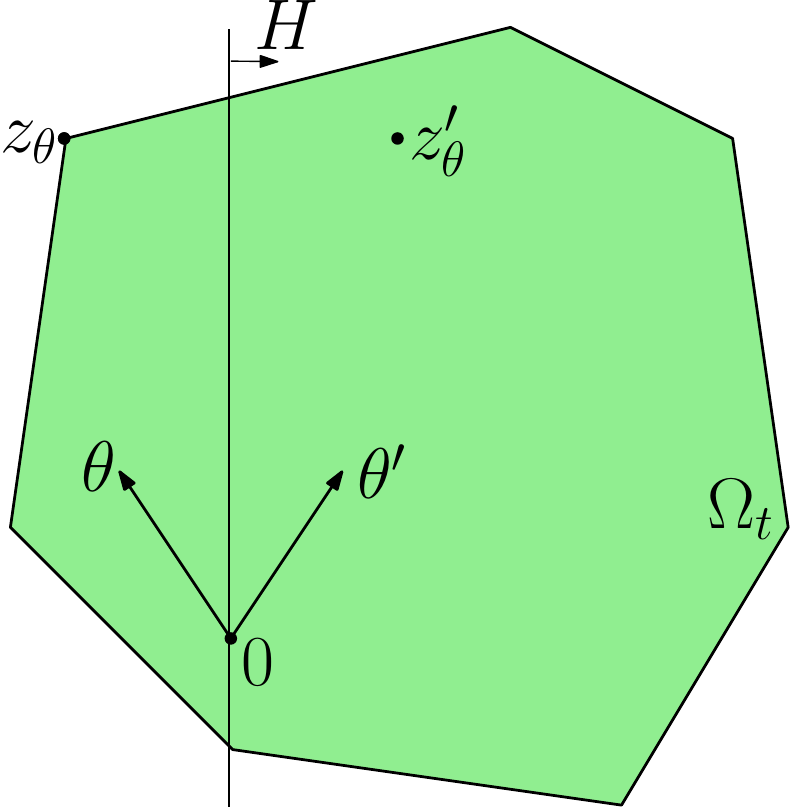}
      \caption{Alternate proof of Lemma~\ref{lem:feasible}}
      \label{fig:feas-lemma-2}
    \end{center}
\end{figure}
  
  Consider some direction $\theta \in S^{d-1}$
  such $\ip{\theta, a} \leq 0$ (i.e., $\theta$ lies to the left of the
  separating hyperplane in \Cref{fig:slice}).
  Let $z_\theta \in \Omega_t$ be such
  that $h_{\Omega_t}(\theta) = \ip{\theta, z_\theta}$.
  If $z_\theta \in H$ then define $z'_\theta = z_\theta$, else define
  $z'_\theta = \rho(z_\theta)$ to be the reflection of $z_\theta$ across
  $H^=$. By the arguments above, $z'_\theta \in \Omega_t$ as well, and
  moreover by construction $z'_\theta \in H$.  Finally, let
  $\theta' := \rho(\theta)$ be the reflection of
  $\theta$ across $H^=$. We claim that
  \begin{gather}
    \ip{z'_\theta, \theta'} \geq \ip{z_\theta, \theta} \label{eq:4}
  \end{gather}
  Indeed, we can rewrite~(\ref{eq:4}) as
  $\ip{z_\theta' - z_\theta, \theta'} + \ip{z_\theta, \theta' -
    \theta} \ge 0$. The first term is either 0 (if $z_\theta \in H$), or it is
  $-2\ip{z_\theta,a}\ip{a, \theta'} \geq 0$. To bound the second term, since
  $\theta'$ is the reflection of $\theta$ across $H^=$,
  $\ip{z_\theta, \theta' - \theta} = -2\ip{\theta, a}\ip{z_\theta, a}
  \geq 0$. This proves~(\ref{eq:4}), and we can infer that
  \[ h_{\Omega_t}(\theta') \geq \ip{z'_\theta, \theta'}
    \stackrel{(\ref{eq:4})}{\geq} \ip{z_\theta, \theta} =
    h_{\Omega_t}(\theta), \] where the first inequality follows from the
  definition of the support function as the maximizer.
So $\theta h_{\Omega_t}(\theta) + \theta' h_{\Omega_t}(\theta')$ is a
point in $H$.
Averaging over the choices of $\theta$ gives us a point also in
$H$. Since $H^=$ contains the origin, scaling by $d$ still gives a point in $H$.
\end{proof}

\Cref{thm:main} is now proved by applying standard doubling
arguments (as in \cite{argue2019nearly}) to the results of~\cite{bubeck2018chasing}.

\begin{proof}[Proof of \Cref{thm:main}]
  We consider the progression of the algorithm in \emph{phases}; each
  new phase begins when $r$ changes. Suppose the phase corresponding
  to some value of $r$ consists of times
  $\{t_1, \ldots, t_2\}$.  From the fact that the work function is
  non-decreasing over time, i.e., $w_t\le w_{t+1}$, it follows that
  \[\Omega_{t_1}\supseteq \Omega_{t_1+1} \supseteq \dots \supseteq \Omega_{t_2}.\]
  Moreover, $B(0,2r)\supseteq \Omega_t$ for each $t$ in this phase. 
  This gives an instance of the
  nested body chasing problem.  From Bubeck et
  al.~\cite[Theorem~3.3]{bubeck2018chasing}, the algorithm that moves to
  the Steiner point $\st(\Omega_t)$ at each time pays at most
  $(2r)\cdot O(\min(d, \sqrt{d \log T}))$. It follows that our algorithm
  pays at most that quantity plus an additive $O(r)$ during the phase,
  where the extra $O(r)$ comes from imaginarily moving to $0$ at the
  beginning of the phase.  Summing over all phases, we pay at most
    $r_{\final}\cdot O(\min(d, \sqrt{d \log T}))$, whereas $OPT$ pays at
    least $r_{\final}$. Hence the proof.
\end{proof}

\section{An Efficient Implementation for Chasing Half-Spaces}
\label{sec:efficient}

We now give an efficient implementation of our 
$O(\min(d,\sqrt{d\log T})$-competitive algorithm. For simplicity we consider the 
case where each body $K_t$ is a half-space
$\{x \mid a_t^\intercal x \geq b_t \}$. 
Henceforth, we will use $\L_t$ to denote
the bit-complexity of the sequence $\{(a_s, b_s)\}_{s \leq t}$.

The basic idea is the following: (a)~we show how to construct (weak)
evaluation oracles for the work function $w_t$ and for the support function
$h_{\Omega_t}$ of the body $\Omega_t$. Then (b)~using
Definition~(\ref{eq:2}) and the evaluation oracle for the support
function, we use random sampling to compute a point $x_t$ such that the
expected error
$\E[\|x_t - \st(\Omega_t)\|] \le O(\nicefrac{1}{t^2})\cdot OPT$; this
computation takes $\poly(\L_t)$ time. Now the total expected error over
all the time-steps is $O(\sum_t \nicefrac{1}{t^2})\cdot OPT = O(1) \cdot OPT$; this
adds only a constant to the competitive ratio. Along the way, we need to
change the algorithm slightly to control the bit-precision issues.

\subsection{The Modified Algorithm}
\label{sec:modified-algorithm}

To ensure that $\Omega_t$ remains full-dimensional, we change the
algorithm by stopping each phase when we know that $\OPT$ has increased
by a constant factor, but $\Omega_t$ still contains a large enough ball.

\begin{algorithm}
  \caption{Efficient Steiner Point Chasing}
  \label{alg:efficient-chasing}
  \begin{algorithmic}[1]
    \State $x_0 \gets 0$, $r \gets \dist(x_0, K_1)$.
    \For{$t = 1,\dots,$ }
       \If{$\textsc{MinWF}(w_t,\frac{r}{100})  > \frac{3r}{2} - \frac{r}{100}$} \label{ln:double}
         \State $r \gets \textsc{MinWF}(w_t,\frac{r}{100}) $ \label{ln:doub2}
       \EndIf
       \State $\Omega_t \gets \{x \mid w_t(x)\le 2r\}$
            \Comment $\Omega_t$ contains an $r/2$-ball \label{ln:choice-eff}
       \State $\widehat x_t \gets \textsf{Steiner}(\Omega_t,2r,\frac{r}{t^2}, 1)$  \label{ln:compute}
            \Comment Computed by \Cref{alg:appx-steiner-point} 

       \State $x_t \gets \textsc{Proj}(\widehat x_t, K_t)$ \label{ln:proj}
            \Comment $\textsc{Proj}(\widehat x_t, K_t) := \widehat x_t + (b_t - \la a_t, \widehat x_t\ra) \frac{a_t}{\|a_t\|}$
    \EndFor
  \end{algorithmic}
\end{algorithm}

Note the changes in the algorithm: when the minimum work-function value
(i.e., the optimal cost thus far) exceeds approximately $3r/2$, we reset
our estimate $r$ for the work function. This ensures that (a)~the
optimal cost increases by a constant factor each time we reset $r$, and
(b)~when we reach line~\ref{ln:choice-eff}, the optimal value is
guaranteed to be no more than $3r/2$, and hence $\Omega_t$ contains an
$r/2$ ball within it. The former property allows us to use the
``doubling trick'', and hence only the last phase matters. The latter is
useful for the computations, since $\Omega_t$ is ``centered'' (it is
sandwiched between two balls of $\Theta(r)$ radius).

There are two steps that need to be explained further. In~\Cref{cor:opt} we
show how to compute $\textsc{MinWF}(w_t, \e)$, the minimum work function
value up to additive error $\e$, which is needed in lines~\ref{ln:double}-\ref{ln:doub2}. In \Cref{sec:algo-steiner} we give a
randomized algorithm to compute an approximate Steiner point of the body
$\Omega_t$, needed in line~\ref{ln:compute}. This computation additionally requires the radius $R$ of a
bounding ball for $\Omega$ (which we set to $2r$), an error parameter
(which we set to $r/t^2$) and a failure probability parameter $\delta$
(which we can set to $1$ in our case since we are concerned only with expected cost).
\Cref{lem:steiner-appx} then ensures that the expected distance from the
true Steiner point is at most $\e_t = (1+\sqrt{\delta})r/t^2 = O(r/t^2)$.  
Finally, in line~\ref{ln:proj} we project this approximate Steiner point onto the 
half-space to get a feasible point.

\begin{lemma}  \label{lem:error-bound}
  Suppose the point $\widehat x_t$ satisfies
  $\E[\|\widehat x_t-\st(\Omega_t)\|] \le \e_t$ and $ALG$ outputs
  $x_t := \Pi_{K_t}(\widehat x_t)$, the projection of $\widehat{x}_t$
  onto the body $K_t$. Then the expected cost incurred by $ALG$ is at most
  $2\sum \e_t$ greater than the algorithm that plays the Steiner point
  $\st(\Omega_t)$ at each step.
\end{lemma}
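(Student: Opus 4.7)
The plan is to compare the algorithm that plays $x_t = \Pi_{K_t}(\widehat x_t)$ against the idealized algorithm that plays $y_t := \st(\Omega_t)$, by using two facts: the Steiner point $\st(\Omega_t)$ already lies in $K_t$ (by \Cref{lem:feasible}), and the Euclidean projection onto a convex set is $1$-Lipschitz (i.e., a non-expansion).

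The first step is to bound the per-step deviation. Since $y_t = \st(\Omega_t) \in K_t$, we have $\Pi_{K_t}(y_t) = y_t$, and hence
\[
\|x_t - y_t\| = \|\Pi_{K_t}(\widehat x_t) - \Pi_{K_t}(y_t)\| \le \|\widehat x_t - y_t\|.
\]
Taking expectations and using the hypothesis gives $\E[\|x_t - y_t\|] \le \e_t$.

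The second step is a triangle-inequality telescoping argument. Writing
\[
\|x_t - x_{t-1}\| \le \|x_t - y_t\| + \|y_t - y_{t-1}\| + \|y_{t-1} - x_{t-1}\|
\]
and summing over $t = 1,\dots,T$, each $\|x_t - y_t\|$ appears at most twice (and $\|x_0 - y_0\| = 0$ since $x_0 = y_0 = 0$, while the boundary term $\|x_T - y_T\|$ can be bounded by $2\|x_T - y_T\|$). This yields
\[
\sum_{t=1}^T \|x_t - x_{t-1}\| \le \sum_{t=1}^T \|y_t - y_{t-1}\| + 2\sum_{t=1}^T \|x_t - y_t\|.
\]
Taking expectations on both sides and plugging in the per-step bound from step one finishes the proof: the expected movement of $ALG$ exceeds the movement of the ideal Steiner-point algorithm by at most $2\sum_t \e_t$.

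There is no real obstacle here; the only things to be careful about are that $\Pi_{K_t}$ is well-defined as the closed-form half-space projection used on line~\ref{ln:proj}, that Lipschitzness of projection applied to $y_t = \st(\Omega_t)$ gives the desired inequality only because $\st(\Omega_t) \in K_t$ (which is precisely what \Cref{lem:feasible} provides), and that the telescoping handles the boundary terms cleanly because $x_0 = y_0$.
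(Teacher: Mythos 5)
Your proof is correct and follows essentially the same argument as the paper: bound $\|x_t - \st(\Omega_t)\|$ via a projection property (the paper invokes "projection onto $K_t$ brings you closer to any point of $K_t$," you invoke non-expansiveness plus $\Pi_{K_t}(\st(\Omega_t)) = \st(\Omega_t)$, which are equivalent here), then apply the same triangle-inequality decomposition and sum. The extra care you take with the telescoping and boundary terms is a slightly more explicit version of the paper's one-line "summing over all times completes the proof."
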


 \begin{proof}
  Firstly, since $x_t$ is the projection of $\widehat x_t$ onto the body $K_t$, and
  $\st(\Omega_t) \in K_t$ 
  by~\Cref{lem:feasible}, we get $\|x_t - \st(\Omega_t)\| \le
  \|\widehat x_t - \st(\Omega_t)\|$. 
  Secondly, the triangle inequality implies:
  \begin{align*}
  \|x_t - x_{t-1}\|
  	&\le \|x_t - \st(\Omega_t)\| + \|\st(\Omega_t) - \st(\Omega_{t-1})\| 
     + \|\st(\Omega_{t-1}) - x_{t-1}\|.
     \intertext{Taking expectations, and using the assumptions of the lemma, we get
     }
\E[  \|x_t - x_{t-1}\|]  	&\le \e_t + \|\st(\Omega_t) - \st(\Omega_{t-1})\| 
	+ \e_{t-1}.
  \end{align*}
  Summing over all times completes the proof.
\end{proof}

\begin{corollary}\label{cor:alg2}
  The expected cost of
  \Cref{alg:efficient-chasing} is $O(\min(d,\sqrt{d \log
    T}))\cdot OPT$.
    Furthermore, the algorithm runs in time $\poly(\L_T)$. 
\end{corollary}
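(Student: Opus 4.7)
The plan is to prove the competitive-ratio bound by reducing to the exact-Steiner-point algorithm via \Cref{lem:error-bound}, and then to verify the runtime by inspecting each subroutine. First I would check that the modified reset rule preserves a good doubling structure. Whenever we do \emph{not} reset at step $t$ we have $\textsc{MinWF}(w_t, r/100) \le 3r/2 - r/100$, so the true minimum $w_t^\ast \le 3r/2$. Since $w_t$ is $1$-Lipschitz in $x$ (the trailing $\|x_t - x\|$ term dominates), a ball of radius $r/2$ around the minimizer lies inside $\Omega_t = \{x \mid w_t(x) \le 2r\}$, justifying the comment on line~\ref{ln:choice-eff}; and $w_t(x) \ge \|x\|$ gives $\Omega_t \sse B(0, 2r)$. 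On the other hand, whenever a reset does fire, the new $r$ exceeds $(3/2 - 1/100)\cdot r_{\text{old}}$, while $OPT \ge w_t^\ast \ge r_{\text{new}} - r_{\text{old}}/100$, so the successive values of $r$ form a geometric progression with $r_{\final} = \Theta(OPT)$.

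Next, within a fixed phase the bodies $\Omega_t$ form a nested sequence contained in a ball of radius $2r$, and Bubeck et al.~\cite[Theorem~3.3]{bubeck2018chasing} bounds the cost of the idealized algorithm (one that plays $\st(\Omega_t)$ exactly) by $O(\min(d, \sqrt{d \log T})) \cdot r$ per phase. The same doubling argument as in the proof of \Cref{thm:main}, together with the $O(r)$ virtual return to the origin at each phase boundary, sums these across phases to $O(\min(d, \sqrt{d \log T})) \cdot r_{\final}$. By \Cref{lem:error-bound}, the actual expected cost of \Cref{alg:efficient-chasing} exceeds this idealized cost by at most $2\sum_t \e_t$, with $\e_t = O(r_t/t^2) \le O(r_{\final}/t^2)$ (using that $r$ is non-decreasing), so the extra term telescopes to $O(r_{\final}) = O(OPT)$, yielding the claimed competitive ratio.

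For the runtime, each iteration performs a $\textsc{MinWF}$ evaluation (polynomial in $\L_t$ by \Cref{cor:opt}), an approximate-Steiner-point computation (polynomial in $\L_t$ by \Cref{lem:steiner-appx}), and a closed-form projection onto a halfspace. Since $T \le 2^{O(\L_T)}$ (each request contributes at least one bit), summing over iterations yields $\poly(\L_T)$ overall. The main obstacle I anticipate is tightening the doubling bookkeeping in the presence of the $r/100$ additive slack in $\textsc{MinWF}$: one must show simultaneously that each reset multiplies $r$ by a constant factor (so the geometric series of per-phase costs is dominated by the last phase) and that $r_{\final}$ is at most a constant times $OPT$ (so the final-phase bound is charged to $OPT$). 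Once this bookkeeping is cleaned up, the rest slots directly into the analysis of \Cref{thm:main}.
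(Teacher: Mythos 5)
Your proposal is correct and follows essentially the same route as the paper: bound the cost of the idealized (exact Steiner-point) algorithm via the per-phase nested-chasing analysis, use \Cref{lem:error-bound} to pay the extra $2\sum_t \e_t = O(r_{\final})$ for the approximation error, and argue the runtime by summing per-step $\poly(\L_t)$ work over at most $T = O(\L_T)$ steps; you fill in more of the doubling bookkeeping (why $r_{\final} = O(1)\cdot OPT$ and why $\Omega_t$ is sandwiched between balls) that the paper asserts ``by construction.'' One small slip: you write $T \le 2^{O(\L_T)}$, but the runtime argument needs the stronger $T \le \L_T$ (which your parenthetical ``each request contributes at least one bit'' already justifies); as written, the $2^{O(\L_T)}$ bound would not yield a $\poly(\L_T)$ total.
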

\begin{proof}
  The cost of the ideal algorithm that plays the actual Steiner point is
  $O(\min(d,\sqrt{d \log T}))\cdot OPT$. Consider $r_{\final}$, the
  final value of $r$ used by the algorithm; by construction
  $r_{\final} = O(1)\cdot OPT$. Since we chose
  $\e_t \leq O(1/t^2) r_{\final}$, the expected extra cost is
  $2\sum_t \e_t = O(1)\cdot OPT$.

  At step $t$, we can compute both $\textsc{MinWF}$ and $\textsc{Steiner}$ in 
  $\poly(\L_t, \log (1/\e_t))$.  Note that $\L_t \le \L_T$ and
  $\log(1 / \e_T) = \log (\poly(T))\le \poly(\L_T)$. 
  The computation time for step $t$ is at most $\poly(\L_T)$, and we perform at most $2T$ 
  such computations, hence the total computation time is bounded by $\poly(\L_T)$. 
\end{proof}

Next we show how to compute the work function, and the Steiner point.

\subsection{Computing the work and support functions}
\label{sec:weak-eval-oracle}

In this section, we show how to compute the minimum value of the  
work function $w_t$ and the support function $h_{\Omega_t}$ of the body $\Omega_t$.
The latter will be used in our estimation of the Steiner point
in~\Cref{sec:algo-steiner}.
We do each calculation by writing it as a second-order cone program.

\begin{lemma}[Minimization of Work Function]
    \label{cor:opt}
  Given a sequence $\{(a_i, b_i)\}_{i =1}^t$
  and an error tolerance $\e > 0$,
  there is a procedure $\textsc{MinWF}(w_t,\e)$  that outputs a
  value $v$ satisfying $|\min_x w_t(x) - v| \leq \e$ in time
  $\poly(\L_t,\log(1/\e))$. 
\end{lemma}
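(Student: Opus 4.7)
The plan is to express $\min_x w_t(x)$ as a second-order cone program (SOCP) whose input coefficients have total bit-complexity $\L_t$, and then invoke a standard polynomial-time SOCP solver.

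First I would note that in the defining convex program for $w_t(x)$, the trailing term $\|x_t - x\|$ vanishes at the choice $x = x_t$, so
\[
  \min_x w_t(x) \;=\; \min\Big\{\sum_{s=1}^t \|x_s - x_{s-1}\| \;:\; x_0 = 0,\ a_s^\intercal x_s \ge b_s \text{ for } 1 \le s \le t\Big\}.
\]
Introducing an auxiliary scalar variable $u_s$ for each $s$, this is equivalent to minimizing $\sum_{s=1}^t u_s$ subject to the second-order cone constraints $\|x_s - x_{s-1}\| \le u_s$ together with the linear constraints $a_s^\intercal x_s \ge b_s$ and $x_0 = 0$. The resulting SOCP has $O(td)$ decision variables and $O(t)$ conic/linear constraints, and all of its numerical coefficients are drawn directly from the input sequence $\{(a_s,b_s)\}_{s \le t}$, so their total bit-complexity is exactly $\L_t$.

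Next I would apply a standard interior-point method for SOCPs (for instance following Nesterov--Nemirovski), which returns a value $v$ satisfying $|v - \min_x w_t(x)| \le \e$ in time $\poly(n, m, \L_t, \log(1/\e))$, where $n$ and $m$ denote the numbers of variables and constraints. Since $n, m \le \poly(t) \le \poly(\L_t)$, this gives the advertised $\poly(\L_t, \log(1/\e))$ running time.

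I do not anticipate any serious obstacle. The SOCP is always feasible (each $K_s$ is a non-empty half-space, so any independent choice of $x_s \in K_s$ with $u_s := \|x_s - x_{s-1}\|$ is feasible) and bounded below by $0$, so off-the-shelf solvers apply without modification. The only non-trivial fact being cited is the well-known guarantee that SOCPs admit $\e$-additive approximation in time polynomial in the bit-complexity and $\log(1/\e)$.
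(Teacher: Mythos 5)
Your approach is essentially the same as the paper's: drop the trailing $\|x_t - x\|$ term (taking $x=x_t$), introduce scalar slacks for the norms, and solve the resulting second-order cone program with an interior-point method, observing that the coefficient data is exactly $\{(a_s,b_s)\}$ and the problem size is $\poly(t)$.

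The one thing the paper is careful about that you elide is \emph{strict} feasibility. Your feasible point sets $u_s := \|x_s - x_{s-1}\|$, which satisfies the cone constraints with equality and hence is not in the interior; and you make no mention of pushing $x_s$ off the boundary of $K_s$. The paper instead exhibits a strictly feasible point by taking $x_i = (b_i+\delta)\,a_i/\|a_i\|$, which lies strictly inside each half-space (and implicitly one also takes $\lambda_i$ strictly larger than the corresponding norm). This matters because the clean ``solve SOCP to additive error $\e$ in time $\poly(\text{bit-size}, \log(1/\e))$'' guarantee you invoke is not unconditional: the standard results (the paper cites~\cite[Thm~14]{AG03} for strong duality and~\cite[Sec.~4.6.2]{BN2001} for the interior-point bound) require a Slater point to ensure a zero duality gap and that the central path is well-defined. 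Without verifying strict feasibility, citing ``off-the-shelf solvers apply without modification'' is a small but real gap. It is easily repaired — perturb your feasible point slightly, exactly as the paper does — so the overall argument is sound once that sentence is added.
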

\begin{proof} 
  Consider the following program on 
  variables $(x_1, \lambda_1), \ldots, (x_t, \lambda_t)$, where each $\lambda_i$
  is a scalar (and recall that $x_0 = 0$):
  \begin{alignat*}{2}
    \min \quad \sum_{i=1}^t & \ \lambda_i  && \\
    \text{s.t.} \quad \norm{x_i - x_{i-1}} &\leq \lambda_i & \qquad \qquad & \forall i \in
    [t]\\
    \bkt{a_i}{x_i} &\geq b_i &&\forall i \in [t]
    \end{alignat*}
    The above program is a \emph{second-order cone program}. 
    (Second-order cone programming is a special case of semidefinite programming.)
    By setting $x_i = (b_i + \delta)
    \frac{a_i}{\norm{a_i}}$, we get that
    $\bkt{a_i}{x_i} \geq b_i +
    \delta$, and hence we get a point in the strict interior of the
    polyhedron $K_1 \times K_2 \times \ldots \times K_t$. 
    Since there is a strictly feasible point,  we know that the optimal primal
     and dual solutions exist, and have equal objective by~\cite[Thm 14]{AG03}.
     Furthermore, we know that the optimal solution has bit-complexity bounded by $\poly(\L_t)$. Therefore we can solve this problem, e.g., using interior-point methods, to get an
    $\e$-approximate solution with run-time
    $\poly(\L_t,\log(1/\e))$(see~\cite[Section~4.6.2]{BN2001}). 
\end{proof}

\begin{proposition}[Support Function Oracle]
  \label{fact:separation-oracle}
  Given a sequence $\{(a_i,b_i)\}_{i=1}^t$ and a value $r > 0$ (as defined in line~\ref{ln:doub2}), then
  we can implement an oracle $\textsc{Supp}_{\Omega_t}$ for the support
  function $h_{\Omega_t}$ of $\Omega_t$    such that
  $\textsc{Supp}_{\Omega_t}(\theta, \e)$ outputs 
    a value $v$ within $\e$ of $h_{\Omega_t}(\theta)$ and runs in time 
    $\poly(\L_t,  \log(\nicefrac{1}{\e}))$.  
\end{proposition}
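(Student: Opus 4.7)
The plan is to express the support-function value $h_{\Omega_t}(\theta) = \max_{x \in \Omega_t} \la \theta, x\ra$ as the optimum of a second-order cone program, and then invoke the interior-point machinery used in the proof of \Cref{cor:opt}. The key observation is that $x \in \Omega_t$ exactly when there is a trajectory $x_0 = 0, x_1, \ldots, x_t$ with $x_i \in K_i$ satisfying $\sum_{i=1}^t \|x_i - x_{i-1}\| + \|x_t - x\| \le 2r$, and this membership condition can be certified by a small set of linear and second-order cone constraints.

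Concretely, I would introduce auxiliary variables $x_1, \ldots, x_t \in \R^d$ and scalars $\lambda_1, \ldots, \lambda_{t+1} \ge 0$ and solve
\begin{alignat*}{2}
\max \quad & \la \theta, x\ra && \\
\text{s.t.} \quad & \|x_i - x_{i-1}\| \le \lambda_i & \quad & \forall i \in [t] \\
& \|x_t - x\| \le \lambda_{t+1} && \\
& \sum_{i=1}^{t+1} \lambda_i \le 2r && \\
& \la a_i, x_i\ra \ge b_i & \qquad & \forall i \in [t],
\end{alignat*}
with $x_0 = 0$. The projection of the feasible region onto the $x$-variable is exactly $\Omega_t$, so the optimum of this SOCP equals $h_{\Omega_t}(\theta)$, and any primal solution whose objective is within $\e$ of the optimum yields an $\e$-additive approximation to the support function.

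To invoke the polynomial-time interior-point guarantee of~\cite{BN2001} as in \Cref{cor:opt}, I need strict feasibility and a polynomial bound on the bit-size of a primal-dual optimum. For strict feasibility, I would use the invariant recorded at line~\ref{ln:choice-eff}: the modified algorithm guarantees that $\Omega_t$ contains a ball of radius $r/2$. Taking any interior point $x^\circ$ of that ball together with a corresponding interior witness trajectory, and loosening each cone and halfspace constraint by a small slack, yields a Slater point of bit-complexity $\poly(\L_t)$. For the size bound, $\Omega_t \sse B(0, 2r)$ forces every $x_i$ and $\lambda_i$ to be $O(r)$, and standard SOCP duality then delivers a polynomial-size optimal dual solution. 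An interior-point method thus returns the required $\e$-additive approximation in time $\poly(\L_t, \log(1/\e))$.

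The main difficulty I anticipate, which is mild, is simply tracking the regularity hypotheses needed by the SOCP solver. These fall out from the fact that the modified algorithm maintains $\Omega_t$ sandwiched between balls of radii $r/2$ and $2r$, which is exactly the reason for the doubling modification in \Cref{sec:modified-algorithm}; once this is in hand, the proof reduces to a direct invocation of standard SOCP theory, mirroring \Cref{cor:opt}.
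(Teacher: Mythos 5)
Your proposal matches the paper's proof essentially line for line: the same SOCP (the paper writes $\lambda$ where you write $\lambda_{t+1}$, otherwise identical), the same use of the $r/2$-ball invariant from the modified doubling rule to obtain a Slater point, and the same appeal to strong SOCP duality plus interior-point methods to get an $\e$-additive answer in $\poly(\L_t, \log(1/\e))$ time. The only place you are slightly vaguer than the paper is in actually exhibiting the Slater point: the paper does this concretely by taking the optimal witness trajectory $(x_1^*,\dots,x_t^*)$ for the work-function minimizer and perturbing each $x_i^*$ by $\tfrac{r}{10t^2}\cdot \tfrac{a_i}{\|a_i\|}$, then checking the total length is still $<2r$ thanks to the early-doubling in line~\ref{ln:double}; your phrase ``a corresponding interior witness trajectory, loosening each constraint by a small slack'' gestures at exactly this but leaves the perturbation implicit. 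That is a presentation gap rather than a logical one.
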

\begin{proof}
We use a second-order cone program similar to the one in the previous proof.
We add variables $x$ and $\lambda$ corresponding to an extra step after satisfying
request $K_t$. Furthermore, we stipulate that the total distance traveled be at most $2r$
to ensure that $x\in \Omega_t$.
  \begin{alignat*}{2}
    \max  \la \th, x\ra & & & \\
    \text{s.t.} \quad \norm{x_i - x_{i-1}} &\leq \lambda_i & \qquad \qquad & \forall i \in [t]\\
     \norm{x- x_t} &\leq \lambda & \qquad \qquad & \forall i \in [t]\\
     \bkt{a_i}{x_i} &\geq b_i &&\forall i \in [t]\\
    \lambda + \sum_{i=1}^t \lambda_i &\leq 2r && 
    \end{alignat*}
    We will first show that the feasible set contains a point in the strict 
    interior. Let $x_t^*$ denote the minimizer of the work function $w_t$, and 
    let $(x_1^*, \dots, x_t^*)$ be an optimal path witnessing the value of 
    $w_t(x_t^*)$. We know that $\Omega_t$ contains the ball of radius $r/2$ 
    around $x_t^*$. Consider the points 
    $x_i = x_i^* + \left(\frac{r}{10\cdot t^2}\right) \frac{a_i}{\norm{a_i}}$ 
    and set $x = x_t$. By construction, we know that each $\bkt{a_i}{x_i} > b_i$. 
    Furthermore, we get a point that satisfies the distance constraint strictly:
    \begin{align*}
        \sum_{i=1}^{T} \norm{x_i - x_{i-1}} + \norm{x_t - x} 
        	&= \sum_{i=1}^{T} \norm{x_i - x_{i-1}} \\
			&\leq \sum_{i=1}^T \left( \norm{x_i^* - x_{i-1}^*} 
				+ 2 \cdot \frac{r}{10 t^2} \right) 
			\leq \frac{3r}{2} +  \sum_{i=1}^T  \frac{r}{5 t^2}  < 2r.
    \end{align*}
    The second inequality follows from the triangle inequality, and the third
    inequality comes from the early doubling in line~3. 
    Since the optimal primal solution is bounded above by $2r$ and there is a
    strictly feasible point, we know that the optimal primal and dual solutions
    exist, and have equal objective (again, by~\cite[Thm 14]{AG03}). We can now
    use interior point methods to get an $\e$-approximate solution in time 
    $\poly(\L_t,\log( \nicefrac{1}{\e}))$ as in the preceding proof.
\end{proof}

\subsection{Approximating the Steiner Point}
\label{sec:algo-steiner}

Finally, we show how to approximate the Steiner point of a convex body using
the oracle for the support function. Recall the
definition~(\ref{eq:2}) of the Steiner point $\st(K)$ as
\begin{gather*}
  \st(K) = d\; \E_{\theta \sim S^{d-1}} [ \theta\, h_K(\theta) ], 
\end{gather*}
where $h_K(\theta):= \max_{x\in K} \bkt{ \theta}{x} $ is the support function
of the convex body $K$.

We are given half spaces $\{(a_1,b_1),\dots,(a_t,b_t)\} $ and access to an oracle
 $\textsc{Supp}_{K}(\theta, \e)$ (defined
in \Cref{fact:separation-oracle}) to compute the support function; given a vector 
$\theta \in S^{d-1}$, this oracle returns a value $v$ such that
$|v - h_K(\theta)| \leq \e$. We assume that we are also given a radius
$R$ of a bounding ball $B(0,R)\supseteq K$. The estimation algorithm 
samples some number $N$ of random directions $\theta$, uses
the oracle to get estimates for $h_K(\theta)$, and then uses the
empirical mean as an estimate for the expectation
in~\eqref{eq:2}. Formally, the algorithm is specified as
\Cref{alg:appx-steiner-point}.

\begin{algorithm}
  \caption{\textsf{Steiner}($K,R, \e, \delta$)}
  \label{alg:appx-steiner-point}
  \begin{algorithmic}[1]
      \State $N \leftarrow \frac{(d+1)^2 R^2}{\e^2 \delta}$
    \For{ $i = 1,\dots, N$ }
    \State Sample $\theta_i$ uniformly from $S^{d-1}$
    \State $p_i \gets \textsc{Supp}_{K}(\theta_i,\e/d)$
    \EndFor
    \State $S \gets \frac{d}{N}\sum_{i=1}^N p_i\th_i$
    \State \Return $S$
  \end{algorithmic}
\end{algorithm}

We show that~\Cref{alg:appx-steiner-point} computes a $2\e$-approximate
Steiner point both in expectation and with probability $1-\delta$, although 
only the former is necessary for this work.

\begin{lemma} \label{lem:steiner-appx} For $\e, \delta > 0$, suppose
  $K\sse B(0,R)$ is a bounded convex body specified by a weak linear
  optimization oracle $\textsc{Supp}_{K}$. Let $\hat{x}$ be the output of
  $\textsf{Steiner}(K,R,\e,\d)$. Then 
  \begin{OneLiners}
  \item[(a)] $\E[\| \hat{x}-\st(K)\|] \leq \e (1+\sqrt{\delta})$, and
  \item[(b)] $\Pr[\| \hat{x}-\st(K)\| \leq 2\e] \geq 1-\d$.
  \end{OneLiners}
\end{lemma}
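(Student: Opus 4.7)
The plan is to decompose the error $\hat{x} - \st(K)$ into two natural pieces and bound each separately. Using definition~\eqref{eq:2}, write $\st(K) = d\,\E_\theta[\theta\, h_K(\theta)]$ and split
\[
\hat{x} - \st(K) = \underbrace{\frac{d}{N}\sum_{i=1}^N (p_i - h_K(\theta_i))\,\theta_i}_{E_1} + \underbrace{\left(\frac{d}{N}\sum_{i=1}^N h_K(\theta_i)\,\theta_i - d\,\E_\theta[\theta\, h_K(\theta)]\right)}_{E_2}.
\]
Here $E_1$ captures the error introduced by the support-function oracle, and $E_2$ captures the Monte Carlo sampling error.

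The term $E_1$ admits a deterministic bound: since $\|\theta_i\| = 1$ and the oracle guarantees $|p_i - h_K(\theta_i)| \le \e/d$, the triangle inequality gives $\|E_1\| \le \frac{d}{N}\cdot N\cdot (\e/d) = \e$. So the whole burden falls on controlling $E_2$, which is where I expect the main work to be.

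For $E_2$, let $X_i := d\, h_K(\theta_i)\, \theta_i$, so $E_2 = \frac{1}{N}\sum_i X_i - \E[X_i]$ is the deviation of an empirical mean of i.i.d.\ vectors. Using $K \sse B(0,R)$, one has $|h_K(\theta)| \le R$, hence $\|X_i\| \le dR$ and $\E\|X_i - \E X_i\|^2 \le \E\|X_i\|^2 \le d^2 R^2$. The key calculation is the vector variance identity
\[
\E\|E_2\|^2 \;=\; \tfrac{1}{N}\,\E\|X_1 - \E X_1\|^2 \;\le\; \tfrac{d^2 R^2}{N}.
\]
Plugging in $N = (d+1)^2 R^2/(\e^2 \delta)$ yields $\E\|E_2\|^2 \le \frac{d^2 \e^2 \delta}{(d+1)^2} \le \e^2 \delta$.

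The two claims follow immediately. For (a), Jensen's inequality gives $\E\|E_2\| \le \sqrt{\E\|E_2\|^2} \le \e\sqrt{\delta}$, so combining with the bound on $E_1$ via the triangle inequality,
\[
\E\|\hat{x} - \st(K)\| \;\le\; \|E_1\| + \E\|E_2\| \;\le\; \e + \e\sqrt{\delta} \;=\; \e(1+\sqrt{\delta}).
\]
For (b), Markov's inequality applied to $\|E_2\|^2$ gives $\Pr[\|E_2\| > \e] = \Pr[\|E_2\|^2 > \e^2] \le \e^2\delta/\e^2 = \delta$, so with probability at least $1-\delta$ we have $\|\hat{x}-\st(K)\| \le \|E_1\| + \|E_2\| \le 2\e$. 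The main subtlety is simply the vector-variance step; there is no need for more delicate concentration (e.g.\ matrix Bernstein) since the second-moment / Markov route is already tight enough for the stated parameters.
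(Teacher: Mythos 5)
Your proof is correct and follows essentially the same route as the paper: split the error into a deterministic oracle term (bounded by $\e$) and a Monte Carlo sampling term, control the sampling term's second moment via the i.i.d.\ vector-variance identity, and then apply Jensen for part (a) and Markov on the squared norm for part (b). The only cosmetic difference is that you bound the per-sample variance by $\E\|X_i\|^2 \le d^2R^2$ via the variance identity, whereas the paper uses the slightly cruder pointwise bound $\|dq_i\theta_i - \st(K)\| \le (d+1)R$; both are valid and yield the same conclusion with the chosen $N$.
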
 

\begin{proof}
  Let $x:= \st(K)$. 
  We first ignore the error of the linear optimization oracle and
  bound only the error due to sampling. Let $q_i := h_K(\th_i)$ be the \emph{true}
  maximum value in the direction of the unit vector $\th_i$ and let
  $\widehat S_N = \frac{d}{N}\sum_{i=1}^N q_i\th_i$. Since the
  $(\th_i,q_i)$ are i.i.d.\ samples with $\E[dq_i\theta_i] = \st(K) = x$  and
  $q_i\le \max_{z\in B(0,R)} \la \th_i, z\ra = R$, we have
  \[\E\left[\|\widehat S_N - \st(K)\|^2\right] 
  	= \E\left[ \left\| \frac1N\sum_{i=1}^N(dq_i\theta_i- x) \right\|^2\right] 
	= \frac{1}{N^2} \sum_{i =1}^N \E\left[\left\| dq_i\th_i - x\right\|^2\right]
  	\leq \frac{(d+1)^2 R^2}{N} \leq \e^2 \delta.
  \]
  Now Jensen's inequality says $\E[\|\widehat S_N -
  \st(K)\|]^2 \leq \E[\|\widehat S_N - \st(K)\|^2] \leq \e^2 \delta$. 
  The second-last inequality follows since both $q_i\theta_i $ and $x$ lie in $B(0,R)$. 
  Moreover, using Markov's inequality in conjunction with the second inequality also
  implies that $\|\widehat S_N - \st(K)\| \le \e$ with
  probability at least $1-\delta$.
  
  Finally, to account for error in the weak linear optimization oracle,
  note that the magnitude of each error is $|p_i - q_i|< \e/d$, hence the
  convexity of norms implies that the additional error between the
  algorithm's output $S$ and the true estimate $\widehat{S}_N$ is 
  $\|S - \widehat S_N\| \le \e$.
  The triangle inequality completes the proof.
\end{proof}

\subsection*{Acknowledgments}

We thank Nikhil Bansal, S\'ebastien Bubeck, Niv Buchbinder, Marco
Molinaro, Seffi Naor, Kirk Pruhs, and Cliff Stein for comments and discussions.

{\small
\bibliographystyle{alpha}
\bibliography{arxiv2}}

\end{document}